\newsavebox{\@brx}
\newcommand{\llangle}[1][]{\savebox{\@brx}{\(\m@th{#1\langle}\)}%
  \mathopen{\copy\@brx\kern-0.5\wd\@brx\usebox{\@brx}}}
\newcommand{\rrangle}[1][]{\savebox{\@brx}{\(\m@th{#1\rangle}\)}%
  \mathclose{\copy\@brx\kern-0.5\wd\@brx\usebox{\@brx}}}
\begin{document}
\mainmatter             

\title{LR Parsing of Permutation Phrases}

\titlerunning{LR Parsing of Permutation Phrases} 
\author{Jana Kostičová}
\authorrunning{Jana Kostičová} 

\institute{Comenius University, Bratislava, Slovakia,\\
\email{kosticova@dcs.fmph.uniba.sk},\\ WWW home page:
\texttt{http://www.dcs.fmph.uniba.sk/$\sim$kosticova/}
}

\maketitle              
																																			\begin{abstract}
This paper presents an efficient method for LR parsing of permutation phrases. In practical cases, the proposed algorithm constructs an LR(0) automaton that requires significantly fewer states to process a permutation phrase compared to the standard construction. For most real-world grammars, the number of states is typically reduced from $\Omega(n!)$ to $O(2^{n})$, resulting in a much more compact parsing table. The state reduction increases with longer permutation phrases and a higher number of permutation phrases within the right-hand side of a rule. We demonstrate the effectiveness of this method through its application to parsing a JSON document.
\keywords{permutation phrase, LR parsing, state complexity, unordered content, JSON}
\end{abstract}

\section{Introductions}
Several of today's languages allow for constructs consisting of unordered content, meaning that any permutation of child-subconstructs is allowed. Examples of such languages include Java, Haskell, XML, 
and JSON (JavaScript Object Notation) \cite{ecma:json,ietf:json}. JSON, an extremely popular tree format for data storage and transmission, is perhaps the most prominent example, as JSON objects always consist of zero or more unordered members.  The structure of JSON data can be constrained by a schema, most commonly using JSON Schema \cite{jsonschema}. A large part of such schema can be expressed using a context-free grammar (CFG) \cite{hopcroft2013:automata}.

EBNF notation for CFGs requires all permutation options to be enumerated on the right-hand side of the rule that results in $n!$ grammar rules. For example, an unordered content over the symbols $A, B, C$ yields $3! = 6$ rules: $S \to ABC | ACB | BAC |BCA$ $| CAB | CBA$.  Cameron, in his work \cite{cameron1993:permphrases}, proposed a shorthand notation for expressing unordered content, called a permutation phrase: 
$X \to \llangle\, A \,\|\, B \,\|\, C  \,\rrangle$.
Using this notation does not affect the  expressiveness of CFGs, nor does it change the language generated by the grammar. However, it makes the language specification significantly more concise and easier to understand. Consequently, it is desirable to adapt common parsing algorithms to accept permutation phrases in the input grammar and to process them more efficiently than the original algorithms, which handled $n!$ rules.

The main contribution of this paper is a modification to LR parsing algorithms that enables efficient parsing of permutation phrases.  The running time remains unaffected, and in practical cases, the number of states in the LR(0)/LR(1) automaton, as well as the size of the resulting parsing table, are significantly reduced compared to the original algorithm. This state reduction is achieved by changing the semantics of LR(0) items: instead of tracking the exact sequence of symbols already seen and expected, we only track the set of symbols seen and expected for permutation phrases. In the standard algorithm, the number of states for processing a permutation phrase of length $n$ is computed as the sum of all $k$-permutations of $n$, $0<k\leq n$. In the modified algorithm  we only need to compute all $k$-combinations of $n$, $0<k\leq n$.

\smallskip\noindent\textbf{Related work.}
To the best of our knowledge, ours it the first approach to efficient LR parsing of permutation phrases. There are few works that extend top-down parsing methods for this purpose: In \cite{cameron1993:permphrases}, a modification to the LL parser is presented that keeps the $O(n)$ running time. In \cite{baars2004:functional} a way how to extend a parser combinator library is proposed. An XML parser presented in \cite{zhang2008:xmlparsing} uses a two-stack pushdown automaton to parse XML documents against an LL(1) grammar with permutation phrases.  

Algorithms for minimizing deterministic finite automata (DFA) \cite{hopcroft1971:minimizing,brzozowski1963:minimizing,moore1956:minimizing} could be used to reduce the states of LR(0) automaton. However they cannot be applied directly since minimizing an LR(0) automaton is different from minimizing a general DFA -- the content of the states must be taken into account to ensure proper shift/reduce actions. In addition, an extra step in the generation of the parsing table would be needed..

\section{Permutation Phrases in CFG grammars}
We assume the reader is familiar with the concepts of context-free grammars, finite automata, the family of LR parsers, and the underlying LR(0)/LR(1) automata. Without loss of generality, we assume that the CFG under consideration does not contain unreachable or non-generating nonterminals.

The right-hand side of a CFG rule is a sequence of grammar symbols called \emph{a simple phrase}. We refer to the set of simple phrases over an alphabet $\Sigma$ as $\Pi_{s}(\Sigma)$, i.e., $\Pi_{s}(\Sigma) = \Sigma^{*}$.  A \emph{permutation phrase} over a set of simple phrases $\{ \sigma_{1}, \sigma_{2},\ldots , \sigma_{n}\}$ is denoted by
$\llangle\, \sigma_{1} \,\|\, \sigma_{2} \,\|\, \ldots \,\|\, \sigma_{n} \,\rrangle$.
We consider a permutation phrase to be only a specific notation for a set of elements, with its semantics explicitly defined by the $expand$ function. 
Let  $\pi$ be the permutation phrase over the set $\{ A, B, C\}$, then $expand(\pi)$ returns all permutation options:
$$\pi = \llangle\, A \,\|\, B \,\|\, C \,\rrangle \mbox{ then } expand(\pi) = \{ ABC, ACB, BAC, BCA, CAB, CBA \}.$$
We denote the set of all permutation phrases over simple phrases from $\Pi_{s}(\Sigma)$ by $\Pi_{p}(\Sigma)$, it holds $\Pi_{p}(\Sigma) = 2^{\Sigma}$. 
We use the following notations for permutation phrases in the rest of this paper:  $|\pi|$ (size), $\pi \cup \pi'$ (union), $\pi \setminus \pi'$ (subtraction), $\sigma \in \pi$ (membership). A partition of $\pi$ into two subsets is denoted by $\{ \pi_{1}, \pi_{2} \}$.
The permutation phrases containing the same elements are considered equal.

Permutation phrases can be integrated into the right-hand sides of CFG rules as a shorthand notation for unordered content. Then each rule with a permutation phrase replaces $n!$ enumerated rules. Let $r$ be of the form:
$r: X \to Y \llangle\, A \,\|\, B \,\|\, CD \,\rrangle$ then
we refer to the set of equivalent enumerated rules as $expand(r)$:
\smallskip\par\noindent
\[
\begin{array}{lcl}
expand(r) & = \{ & X \to YAB CD, X \to Y A CD B, X \to Y B A CD, \\
& & X \to Y B CD A, X \to Y CD A B, X \to Y CD B A\,\, \} 
\end{array}
\]
\begin{definition}
Let $\Sigma$ be an alphabet. The set $\Pi(\Sigma)$ of grammatical phrases with intergated permutation phrases over $\Sigma$ and the expand function capturing their semantics
are defined as follows:
\begin{enumerate}
\setlength\itemsep{0.4em}
\item $\sigma \in \Pi_{s}(\Sigma)$ then $\sigma\in \Pi(\Sigma)$, $expand(\sigma) = \{ \sigma \}$,
\item $\sigma_{1}, \ldots, \sigma_{n} \in \Pi_{s}(\Sigma) \setminus\{\varepsilon\}$ then $\llangle \sigma_{1} \,\|\, \sigma_{2} \,\|\, \ldots \,\|\, \sigma_{n} \rrangle \in \Pi(\Sigma)$,\\
$expand(\llangle\, \sigma_{1} \,\|\, \sigma_{2} \,\|\, \ldots \,\|\, \sigma_{n} \,\rrangle) = \{ 
\mbox{ all permutations of } \sigma_{1}\sigma_{2}\ldots \sigma_{n}  \}$,
\item $\omega_{1}, \omega_{2}\in \Pi_{s}(\Sigma)\cup \Pi_{p}(\Sigma) $ then $\omega_{1} \omega_{2}\in \Pi(\Sigma)$,\\
$expand(\omega_{1}\omega_{2}) = expand(\omega_{1})\,\, expand(\omega_{2})$\footnote{The concatenation of sets $expand(\omega_{1})$ and $expand(\omega_{2})$.}.
\end{enumerate}
\end{definition}
Now we can define CFG with permutation phrases and its expanded grammar:
\begin{definition}
A \emph{context-free grammar with permutation phrases (CFGP)} is a 4-tuple $G = (N,T,P,S)$ where $N$ is a set of nonterminals, $T$ is a set of terminals, $S$ is the initial nonterminal, and $P$ is a finite set of rules
$P\subseteq N\times \Pi((N\cup T)^{*})$.
The \emph{expanded grammar} of $G$ is a CFG $G_{e} = (N, T, P_{e}, S)$ such that
$P_{e} = \bigcup_{r \in P} expand(r)$. 
\end{definition}
Note that each CFG $G$ is also a CFGP and in that case $G_{e} = G$. We refer to the rules of CFGP that contain permutation phrases as \emph{permutation rules} and we use the shorthand notation $\Pi(G)$ for $\Pi((N\cup T)^{*})$. In the rest of this paper, we consider the following grammars (unless stated otherwise): 
$G = (N,T,P,S)$ is a CFGP and $G_{e} = (N,T,P_{e},S)$ is the expanded CFG for $G$. In what follows, we consider permutation phrases over a set of symbols only.  Possible approaches to overcome these limitations are discussed in Section \ref{sec:conclusion}.

\section{LR(0) Items of Permutation Rules}
The LR(0) items (shortly \emph{items}) are rules with the dot in their right-hand side. The dot divides the right-hand side into two parts, possibly empty.
We first define the set of \emph{item phrases}, i.e., the possible right-hand sides of the items.
\begin{definition} 
Given an alphabet $\Sigma$ and $\omega\in\Pi(\Sigma)$, the set $IP(\omega)$ of item phrases of $\omega$ is defined as follows:
\begin{enumerate}
\item $\omega \in \Pi_{s}(\Sigma)$ is a simple phrase then $IP(\omega) = \{ \sigma_{1}\cdot^{(1)}\sigma_{2} \,|\, \sigma_{1}\sigma_{2} = \omega\}$,
\item $\omega\in \Pi_{p}(\Sigma)$ is a permutation phrase then 
\[
\begin{array}{lllll}
IP(\omega) & = & \{ \cdot^{(1)} \pi, \pi \cdot^{(1)} \}\,\, 
& \cup & \{ \pi_{1} \cdot^{(2)} \pi_{2} \,|\, \{ \pi_{1},\pi_{2} \} \mbox{ is a partition of } \pi \}.
\end{array}
\]
\item $\omega = \omega_{1}\omega_{2}$, $\omega_{1},\omega_{2}\in\Pi_{s}(\Sigma)\cup\Pi_{p}(\Sigma)$ is a concatenation of phrases then
$$IP(\omega) = \{ \omega_{1}'\omega_{2} \,|\, \omega_{1}'\in IP(\omega_{1})\} \cup \{ \omega_{1}\omega_{2}' \,|\, \omega_{2}'\in IP(\omega_{2})\} .$$
\end{enumerate}
\end{definition}
\begin{definition}
Let $r\in P$ be of the form: $X \to \omega$. The set of LR(0) items of $r$, denoted by $I(r)$, is defined by
$$I(r) = \{ [X\to \omega'] \,|\, \omega'\in IP(\omega) \}.$$
\end{definition}
We extended the definition with the items of permutation rules. An item of the form $\{ \pi_{1} \cdot^{(2)} \pi_{2} \}$, where $\{ \pi_{1}, \pi_{2}\}$ is a partition of some $\pi$ indicates that the elements of $\pi_{1}$ have already been seen in some order, while the elements in $\pi_{2}$ are expected to be seen in some order. It means that the items for permutation phrases do not care about the exact order of the elements.
 The dot is marked by $(2)$ meaning that the content of a permutation phrase is begin processed. We use dot without superscript to indicate that both $(1)$ and $(2)$ can be used.  We define the set of all items for a given CFGP as the union of items of its rules:
\begin{definition} The set of items for grammar $G$ is defined by
$I(G) = \bigcup_{r\in P} I(r)$.
\end{definition}

\section{Modified Algorithm for Generating LR(0) Automaton}
We first define the $next$ function which, for a given phrase $\omega$ of $G$, returns the set of grammar symbols that appear at the beginning of its expansions.
\begin{definition}
The function $next: \Pi(G) \to 2^{(N\cup T)}$ is defined by
\begin{itemize}
\item $\omega = \varepsilon$ then $next(\omega) = \emptyset$
\item $\omega = Y\omega'$, where $Y\in (N\cup T)$ then $next(\omega) = \{ Y \}$
\item $\omega = \llangle \sigma_{1} \,\|\, \sigma_{2} \,\|\, \ldots \,\|\,  \sigma_{n} \rrangle\, \omega'$ then $next(\omega) = \{ next(\sigma_{1}), next(\sigma_{2}), \ldots , next(\sigma_{n})\}$ 
\end{itemize}
Let $i\in I(G)$ be an item of the form $[X\to \alpha\cdot\beta]$, then $next(i) = next(\beta).$
\end{definition}
For a given item $i$ and a grammar symbol $Y$, the partial function $step$ returns the item that results from the movement of the dot within the item $i$ on symbol $Y$. The function is defined only if such movement is possible, i.e., $Y\in next(i)$.
\begin{definition}
Let $i\in I(G)$ such that $Y\in next(i)$. Then the partial function $step: I(G)\times (N
\cup T) \to I(G)$ is defined as follows (processing of the content of a permutation phrase is marked by a box):
\begin{enumerate}
 \setlength\itemsep{0.4em}
\item Matching 1st level (concatenation of phrases):
\begin{flushleft}
$i = [X\to\omega_{1} \cdot^{\scriptscriptstyle (1)}  \theta \omega_{2}]$, $\theta\in N\cup T\cup \Pi_{p}(G)$ then
 \end{flushleft}
\begin{enumerate}
 \setlength\itemsep{0.4em}
\item if $\theta = Y$  then
$step (i, Y) = [X\to\omega_{1} Y \cdot^{\scriptscriptstyle (1)}  \omega_{2}]$
\item if $\theta\in \Pi_{p}(G)$ then
$step(i, Y) = [X\to\omega_{1} \mbox{ \fbox{$\llangle\,Y\,\rrangle \cdot^{\scriptscriptstyle (2)} (\theta \setminus \{Y\})$} } \omega_{2}]$
\end{enumerate}
\item Matching 2nd level (the content of a permutation phrase):
\begin{flushleft}
$i = [X\to\omega_{1}  \,$\fbox{$\pi_{1} \cdot^{\scriptscriptstyle (2)} \pi_{2}$} $\omega_{2}]$, $\pi_{1},\pi_{2}\in \Pi_{p}(G)$ then
\end{flushleft}
\begin{enumerate}
 \setlength\itemsep{0.4em}
\item if $\pi_{2} = \llangle Y \rrangle$ then 
$step(i, Y) = [X\to\omega_{1} (\pi_{1}\cup\{ Y \}) \cdot^{\scriptscriptstyle (1)} \omega_{2}]$
\item if $|\pi_{2}|\geq 2$ then
$step(i, Y) = [X\to\omega_{1} \mbox{ \fbox{$(\pi_{1}\cup\{ Y \}) \cdot^{\scriptscriptstyle (2)} (\pi_{2} \setminus \{Y\})$} } \omega_{2}]$
\end{enumerate}
\end{enumerate}
\end{definition}

\begin{example}
Let $i_{0}: [X\to \cdot^{\scriptscriptstyle (1)} A \llangle \, B \,\|\, C \, \rrangle\, D]$. Using the notation $i\rightarrow^{step\,Y} j$ for $step(i,Y)=j$ we get: 
\begin{equation*}
\begin{array}{lll}
i_{0} & \rightarrow^{step\,A}  [ X\to A \cdot^{\scriptscriptstyle (1)}  \llangle \, B  \,\|\, C  \, \rrangle\,  D ] & \rightarrow^{step\,C}  [ X\to A\,\,   \llangle \, C \, \rrangle \cdot^{\scriptscriptstyle (2)} \llangle\, B  \rrangle D ] \rightarrow^{step\,B} \\
& \rightarrow^{step\,B}  [X \to A\,\,   \llangle \, B  \,\|\, C\, \rrangle \cdot^{\scriptscriptstyle (1)} D  ] & \rightarrow^{step\,D} 
 [X\to A\,\,   \llangle \, B  \,\|\, C \, \rrangle\,  D \cdot^{\scriptscriptstyle (1)} ]
\end{array}
\end{equation*}
\end{example}
The step function preserves the rule being recognized, meaning that both the original item and the resulting item belong to the set of items $I(r)$ for the same rule $r$:
\begin{proposition}
\label{lem:preserve_items}
Let $r\in P$ and $i\in I(r)$ then $(step(i,Y) = j) \Rightarrow j\in I(r).$
\end{proposition}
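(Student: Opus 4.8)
The plan is to prove the statement by structural induction on the phrase $\omega$ of the rule $r : X \to \omega$, following the three-case recursive definition of $\Pi(\Sigma)$ (equivalently of $IP(\omega)$). First I would observe that both the value of $step$ and membership in $I(r)$ depend only on the dotted right-hand side and not on the left-hand nonterminal $X$; hence it suffices to show that whenever $\omega' \in IP(\omega)$ and $step([X\to\omega'],Y) = [X\to\omega'']$ is defined, we again have $\omega'' \in IP(\omega)$. Definedness of $step$ gives $Y \in next(\omega')$, so there is always a well-defined active atom immediately to the right of the dot, and the proof reduces to checking, case by case, that the matching clause of $step$ keeps us inside $IP(\omega)$.

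For the base cases I would match the form of $\omega'$ against the applicable clause of $step$. If $\omega = \sigma$ is a simple phrase, then $\omega' = \sigma_1 \cdot^{\scriptscriptstyle (1)} \sigma_2$ with $\sigma_1\sigma_2 = \sigma$, only clause 1(a) applies, and moving the dot past one symbol preserves the concatenation $\sigma_1\sigma_2$, so $\omega'' \in IP(\sigma)$. If $\omega = \pi$ is a permutation phrase, the essential invariants to check are that the two parts still form a partition of $\pi$ and that their union stays exactly $\pi$: clause 1(b) sends $\cdot^{\scriptscriptstyle (1)}\pi$ to $\llangle Y\rrangle \cdot^{\scriptscriptstyle (2)} (\pi\setminus\{Y\})$, clause 2(b) sends $\pi_1 \cdot^{\scriptscriptstyle (2)} \pi_2$ to $(\pi_1\cup\{Y\}) \cdot^{\scriptscriptstyle (2)} (\pi_2\setminus\{Y\})$, and clause 2(a) closes the phrase, sending $\pi_1 \cdot^{\scriptscriptstyle (2)} \llangle Y\rrangle$ to $\pi\cdot^{\scriptscriptstyle (1)}$ because $\pi_1 \cup \{Y\} = \pi$. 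In each case the result is literally one of the item phrases enumerated in case 2 of the definition of $IP$.

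For the inductive case $\omega = \omega_1\omega_2$ I would exploit the \emph{locality} of $step$: inspecting its definition, every clause copies the part of the item phrase surrounding the dot verbatim and rewrites only the immediate neighbourhood of the dot (the active atom in clause 1, the boxed region in clause 2). Since $\omega' \in IP(\omega_1\omega_2)$ is either $\hat\omega_1\omega_2$ with $\hat\omega_1 \in IP(\omega_1)$ or $\omega_1\hat\omega_2$ with $\hat\omega_2 \in IP(\omega_2)$, and the active atom lies entirely inside one operand, $step$ acts as the corresponding step on that sub-item-phrase alone and leaves the other operand fixed. The induction hypothesis applied to that operand yields a sub-item-phrase again in $IP(\omega_1)$ (resp.\ $IP(\omega_2)$), so $\omega''$ lies in $IP(\omega_1\omega_2)$.

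The main obstacle I anticipate is the concatenation case, and within it the treatment of the dot sitting exactly at the boundary between $\omega_1$ and $\omega_2$: the item phrase $\omega_1\cdot^{\scriptscriptstyle (1)}\omega_2$ can be read as belonging to either operand, and to align it with the induction hypothesis one must consistently assign the active dot to the operand that owns the next atom, namely $\omega_2$ via its dot-at-the-front item phrase $\cdot^{\scriptscriptstyle (1)}\omega_2$. I would therefore make the locality claim explicit (that clauses 1 and 2 leave the surrounding context untouched), since the entire reduction to the induction hypothesis rests on it; the remaining verifications are routine bookkeeping of the set unions and partitions handled above.
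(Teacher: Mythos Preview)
Your structural-induction argument is correct and considerably more detailed than anything the paper offers: in the paper this proposition is asserted without proof, accompanied only by the one-line remark that ``the superscripts of the dot help to preserve the rule being processed'' and an example illustrating what would go wrong without the $(1)/(2)$ annotations. So there is no formal proof in the paper to compare against; you are supplying what the author omits.

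Your decomposition tracks the recursive definition of $IP(\omega)$ faithfully, and your identification of the one genuinely delicate point --- the boundary dot $\omega_1\cdot^{\scriptscriptstyle (1)}\omega_2$ and the locality of $step$ --- is exactly the right thing to flag. One minor observation: in the paper's Definition~1, clause~3 restricts the operands of concatenation to $\Pi_s(\Sigma)\cup\Pi_p(\Sigma)$, so a phrase is effectively a flat sequence of atomic simple/permutation blocks rather than an arbitrarily nested tree. Your induction still goes through unchanged, but the recursion is shallower than your wording suggests; you could equally well argue directly by locating the unique atomic block that owns the dot and invoking the relevant base case on it, with the surrounding blocks carried along verbatim. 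Either presentation is fine.
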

Actually, the superscripts of the dot help to preserve the rule being processed.
\begin{example} If  we do not mark the dot to determine the level being processed, then the item
$i = [X\to\llangle\, A\, \| \, B \,\rrangle \cdot  \llangle\, C\, \| \, D \,\rrangle]$
can originate from two different rules $r_{1}, r_{2}$ and $step(i, D)$ can result in two different options $j_{1}, j_{2}$:
\[
\begin{array}{lll}
r_{1}: X\to\llangle\, A\, \| \, B \,\rrangle \llangle\, C\, \| \, D \,\rrangle, \,\,\,\, & r_{2}:  X\to\llangle\, A\, \| \, B \,\|\, C\, \| \, D \,\rrangle\\
j_{1}: [X\to\llangle\, A\, \| \, B \,\rrangle \llangle \,C\, \rrangle \cdot  \llangle\, D \,\rrangle], \,\,\,\,& j_{2}: [X\to\llangle\, A\, \| \, B \| \, C \,\rrangle \cdot  \llangle\, D \,\rrangle]
\end{array}
\]
Marking the dot in $i$ by $(1)$ indicates that the top level is being processed - that corresponds to the rule $r_{1}$ and the resulting item $j_{1}$. Marking the dot by $(2)$ indicates that the permutation phrase is being processed - that corresponds to the rule $r_{2}$ and the resulting item $j_{2}$. 
\end{example}

\noindent We modify the standard algorithm for generating LR(0) automaton \cite{aho1986:compilers} as follows:
\medskip

\footnotesize
\noindent{\tt
\textbf{SetOfItems} PERM\_CLOSURE(I)
\{\\
\hspace*{15pt}    $J  = I$;\\
\hspace*{15pt}    \textbf{repeat} \\
\hspace*{30pt}    \textbf{for} (each item $[A\to\alpha \cdot^{\scriptscriptstyle (i)} \beta]$ in $J$) \\
\hspace*{45pt}    \textbf{for} (each nonterminal $B\in next(\beta)$)\\
\hspace*{60pt}    \textbf{for} (each production $B\to \gamma$ of $G$)\\   
\hspace*{75pt}    \textbf{if} ($[B\to\cdot^{\scriptscriptstyle (1)} \gamma]$ not in $J$)\\   
\hspace*{90pt}    add $[B\to\cdot^{\scriptscriptstyle (1)} \gamma]$ to $J$;	\\
\hspace*{15pt}    \textbf{until} no more items are added to $J$ on one round; \\
\hspace*{15pt}	   \textbf{return} $J$;\\
\}
}
\medskip\par\noindent
{\tt
\textbf{SetOfItems} PERM\_GOTO(I,Y)
\{\\
\hspace*{15pt}    $J  = \emptyset$;\\
\hspace*{15pt}     \textbf{for} (each item $[X\to \alpha \cdot \beta]$ of $I$ such that $Y\in next(\beta)$)\\
\hspace*{30pt}    add $step([X\to \alpha \cdot \beta], Y)$ to $J$;\\
\hspace*{15pt}    return $CLOSURE(J)$;\\
\}
}
\normalsize

\medskip\noindent The algorithm uses the generic functions $next$ and $step$ defined above, which can be applied to any phrase, including those with integrated permutation phrases. The algorithm for constructing LR(0) parsing table with shift/reduce actions remains unchanged. 
We define LR(0) automaton for CFGP $G$ as follows:
\begin{definition}
Let $G' = (N',T,P', S')$ be the augmented grammar of $G$ 
such that $N' = N\cup \{S'\}$ and $P'  = P \cup \{ S'\to S \}$.  
Then the \emph{LR(0) automaton for $G$} is a DFA $A = ( \Sigma, Q, \delta, I_{0}, F)$ such that
\begin{itemize}
\item $\Sigma = N\cup T$, $I_{0} = \{ [S' \to \cdot^{(1)} S] \}$,
\item $Q$ and $\delta$ are constructed by $PERM\_GOTO$ function, they are extended by an error state $\emptyset$ and transitions from/to this error state in a standard way to make $A$ complete.
\end{itemize}
\end{definition}
In addition to the grammars $G, G_{e}$ defined before, we consider the following definitions in the rest of this paper (unless stated otherwise):
$A= (\Sigma,  Q, \delta, I_{0}, F)$ is the complete LR(0) automaton for $G$,
$A_{e}= (\Sigma, Q_{e}, \delta_{e}, I_{0e}, F_{e})$ is the complete LR(0) automaton for $G_{e}$ constructed by the standard algorithm. 

\section{Correctness }
\label{sec:correctness}

In this section we define an 1:N mapping between the states of $A$ and $A_{e}$. We prove that $A$ reaches a state $I$ after reading an input $w$ if and only if $A_{e}$ reaches one of the states in $map(I,w)$ after reading the same input $w$. We also show that both automata report the same shift/reduce action(s) on the given input word.
\begin{example}
\label{ex:mapping}
Let $r: X\to \llangle A\,\|\,B \,\|\, C \,\rrangle\, D\in P$ be a rule of $G$ being processed by $A$. After the symbols $A, B, C$ have been matched in some order, $A$ reaches a state containing the item $i: X\to [\llangle A\,\|\,B \,\|\, C \,\rrangle\cdot D]$.  We refer to $\llangle A\,\|\,B \,\|\, C \,\rrangle$ as $\pi$, then $expand(\pi)$ is the set of all permutations of $\{ A,B,C\}$. Two situations may occur:

(1) $A$ contains exactly one such state $I$.  Then all expansions of $\pi$ are input paths for this state (see Fig. \ref{fig:mapping} a)). The $map$ function applied on $I$ returns $3!$ states of $A_{e}$ such that each of them handles the processing of a single input path for $I$. In this case the item $i$ is independent.

(2) $A$ contains more such states, assume there are two of them $I_{1}$ and $I_{2}$ as in  Fig. \ref{fig:mapping} b). Now a part of the expansions of $\pi$ are input paths for $I_{1}$ - namely a single input path $ABC$ - and the rest of the expansions of $\pi$ are input paths for $I_{2}$. 
In this case $i$ interferes with the item $i': X\to ABC\cdot D$. Since the item $i'$ has only  a single input path $ABC$, the item $i$ must accommodate to this limitation that results in generation of two different states $I_{1}$ and $I_{2}$, both containing $i$.  
\begin{figure}[t]
\includegraphics[width=0.8\textwidth]{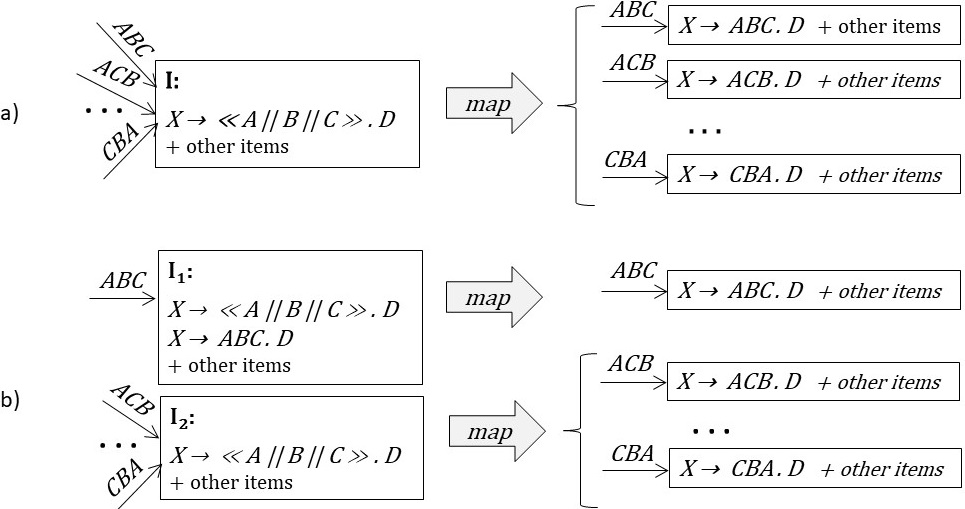}
\caption{The mapping of a) an independent rule b) a dependent rule.}
\label{fig:mapping}
\end{figure}
\end{example}

The $PERM\_GOTO$ function automatically handles rule interferences and generates as many states for $A$ as are actually needed. The more interferences occur, the more states of $A$ are generated, which results in a lower state reduction between $A$ and $A_{e}$. We define the function $paths$ to return input paths for a given state $I$ and the concept of independent CFG rules. It is sufficient to consider input paths no longer than a constant $max$, which is the maximum length of the phrase that appear before the dot in items of $I$. Thus, $paths(I)$ can be computed by examining up to $max$ computation steps backwards from the state $I$.
\begin{definition}
Let $I\in Q$ and $max = max\{|\alpha|\,|\, [X\to\alpha\cdot\beta]\in I\}$. Then the function $paths: Q\to 2^{\Sigma^{*}}$ returns \emph{input paths for} $I$ and is defined by:
$$paths(I) = \{ w \,|\, 0< |w| \leq max\mbox{ and there exists } I'\in Q \mbox{ such that } (I',w)\vdash_{A}^{*} (I,\varepsilon)\}.$$
\end{definition}
Note that $paths(I)$ is never empty and if $w$ is in $paths(I)$ then also all suffixes of $w$ (except empty word) are in $paths(I)$ . Given a set of strings $S$, we will refer to the set of suffixes of the elements of $S$ of length $n$ by $S /_{n}$.  Based on the definition of $\delta$ function, the following proposition holds that captures the relation between $paths(I)$ and items in $I$.  
\begin{proposition}
Let $I\in Q$ and $i\in I$ be an item of the form $I: [X\to\alpha\cdot\beta]$. Then
$paths(I)/_{|\alpha|} \subseteq expand(\alpha)$.
\end{proposition}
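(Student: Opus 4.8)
The plan is to argue by induction on $|\alpha|$, the length of the left part of the item. The base case $|\alpha|=0$ is immediate: then $\alpha=\varepsilon$, the only suffix of length $0$ is the empty word, and $expand(\varepsilon)=\{\varepsilon\}$, so $paths(I)/_{0}=\{\varepsilon\}\subseteq expand(\alpha)$.

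For the inductive step I would fix a state $I$, an item $i=[X\to\alpha\cdot\beta]\in I$ with $\alpha\neq\varepsilon$, and a path $w\in paths(I)$ with $|w|\geq|\alpha|$, and write $w=w'Y$. Since $w$ is nonempty and $A$ is deterministic there is a predecessor state $J$ such that $w'$ reaches $J$ and $I=PERM\_GOTO(J,Y)$. The key observation is that $PERM\_CLOSURE$ only ever adds items whose dot is at the very beginning, i.e.\ with empty left part; since $\alpha\neq\varepsilon$, the item $i$ must therefore be a kernel item of $I$, so $i=step(i',Y)$ for some $i'=[X\to\alpha'\cdot\beta']\in J$ with $Y\in next(\beta')$.

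The core of the argument would then be a case analysis over the four clauses defining $step$, from which I would extract the two invariants $|\alpha|=|\alpha'|+1$ and $expand(\alpha')\{Y\}\subseteq expand(\alpha)$. In the first-level clauses the left part grows by the plain symbol $Y$ or by the singleton phrase $\llangle Y\rrangle$, and the inclusion is in fact an equality. In the second-level clauses $Y$ is absorbed into the already-seen set, turning $\pi_{1}$ into $\pi_{1}\cup\{Y\}$; because $\{\pi_{1},\pi_{2}\}$ is a partition we have $Y\notin\pi_{1}$, and appending $Y$ to any ordering of $\pi_{1}$ yields an ordering of $\pi_{1}\cup\{Y\}$, which gives $expand(\pi_{1})\{Y\}\subseteq expand(\pi_{1}\cup\{Y\})$ and hence the invariant. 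To close the induction I would pass to the suffix $v=w'/_{|\alpha'|}$, which is well defined since $|w'|\geq|\alpha'|$. By the suffix-closure property of $paths$ recorded after its definition, together with $|\alpha'|\leq max$ for the state $J$ (as $i'\in J$), one gets $v\in paths(J)$ when $\alpha'\neq\varepsilon$, and the induction hypothesis applied to $J$ and $i'$ then yields $v\in expand(\alpha')$ (the case $\alpha'=\varepsilon$ being trivial). Finally $w/_{|\alpha|}=v\,Y\in expand(\alpha')\{Y\}\subseteq expand(\alpha)$, completing the step.

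The step I expect to be the main obstacle is precisely the second-level clauses $2(a)$ and $2(b)$: there the equality of left parts that holds for ordinary concatenation degenerates into a proper inclusion, because the modified $step$ keeps only the set $\pi_{1}\cup\{Y\}$ of symbols seen so far and discards their order. This is the structural reason the proposition must read $\subseteq$ rather than $=$, and verifying that $expand(\pi_{1})\{Y\}$ really lands inside $expand(\pi_{1}\cup\{Y\})$ is the delicate point. A secondary nuisance is the bookkeeping with $paths$ and the suffix operator $/_{n}$: I must check that the length-$|\alpha'|$ suffix of $w'$ is again a genuine input path of $J$, which is exactly where the suffix-closure remark and the bound $|\alpha'|\leq max$ are invoked.
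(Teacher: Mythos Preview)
Your argument is correct, and it is essentially the only way to make the paper's justification precise: the paper does not give a proof at all, merely noting that the proposition ``holds based on the definition of the $\delta$ function.'' Your induction on $|\alpha|$ with the case split over the four clauses of $step$, extracting the invariants $|\alpha|=|\alpha'|+1$ and $expand(\alpha')\{Y\}\subseteq expand(\alpha)$, is exactly how one unfolds that one-line remark into an actual proof.
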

Rule independence means that a stronger condition holds and the sets $paths(I)/_{|\alpha|}$ and $expand(\alpha)$ equal for each of the rule items that appear in some $I\in Q$.
\begin{definition}
Let $I\in Q$ and $i\in I$ be an item of the form $I: [X\to\alpha\cdot\beta]$. Then the item $i$ is \emph{independent} in $I$ if and only if $paths(I)/_{|\alpha|} = expand(\alpha)$.
\end{definition}
\begin{definition}
Let $r\in P$, $r$ is \emph{independent} if and only if for each $i\in I(r)$: $i\in I$ implies $i$ is independent in $I$.
\end{definition}
We first define a function $map$ with two arguments:  for an input state $I\in Q$ and an input string $w$, it returns a state $I_{e}\in Q_{e}$, which handles the processing of the input paths of $paths(I)$ that are suffixes of $w$. The value of $map(I, w)$ is undefined if no suffix of $w$ is in $paths(I)$. We use auxilliary $map$ partial functions that map phrases and items of $G$ to phrases and items of $G_{e}$, respectively, with respect to an input $w$. 
\begin{definition} 
The partial function $map: \Pi(G)\times\Sigma^{*}\to\Pi(G_{e})$ is defined by:
\begin{equation}
map(\alpha, w) = \sigma \mbox{ if and only if } \alpha'\in expand(\alpha) \mbox{ and } \alpha' \mbox{ is a suffix of } w.
\end{equation}
The partial function $map: I(G)\times\Sigma^{*} \to 2^{I(G_{e})}$ is defined by:
$$map([X\to \alpha\cdot\beta], w) = \{ [X\to \alpha'\cdot \beta']\in I(G_{e}) \,|\, \alpha' = map(\alpha, w), \beta'\in expand(\beta)\}.$$
The function $map: Q\times\Sigma^{*} \to Q_{e}$ is defined by:
\begin{itemize}
\item $I = \emptyset$ then $map(I,w) = \emptyset$ for any $w\in\Sigma^{*}$ (error state),
\item $I\neq \emptyset$ and $w\neq w_{1}w_{2}$ such that $w_{2}\in paths(I)$  then $map(I,w) = \emptyset$,
\item $I\neq \emptyset$ and $w=w_{1}w_{2}$ such that $w_{2}\in paths(I)$  then
$map(I, w) = \bigcup_{i\in I} map(i, w).$
\end{itemize}
\end{definition}
The $map$ function without the string argument maps a state $I$ of $A$ to the set of the $A_{e}$ states that handle the processing of the paths in $paths(I)$ as shown in Example \ref{ex:mapping}. Note that only the phrases before the dot induce multiple states in $A_{e}$. 
\begin{definition}
The function $map: Q\to 2^{Q_{e}}$ is defined by:
$map(I) = \bigcup\limits_{\substack{\text{w$\,\in$}\\ \text{paths(I)}}} map(I,w).$
\end{definition}
Now we prove the key statements to show that $A$ is correct: the states reached by $A$ and $A_{e}$ on the same input can be related by the $map$ function and additionally, $A$ and $A_{e}$ return the same parser action. Note that both $A$ and $A_{e}$ perform $m$ computation steps to process an input of length $m$ since they are deterministic and complete.
\begin{lemma} Let $w\in \Sigma^{*}$. Let
\label{lem:correct1}
$(I_{0}, w) \vdash_{A}^{|w|} (J,\varepsilon), \mbox{ and } (I_{0e}, w) \vdash_{A_{e}}^{|w|} (J_{e},\varepsilon).$
Then $J_{e} = map(J,w)$.
\end{lemma}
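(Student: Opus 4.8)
The plan is to prove the statement by induction on $|w|$, i.e.\ on the number of computation steps performed. Since both $A$ and $A_{e}$ are deterministic and complete, each processes $w$ in exactly $|w|$ steps, so it suffices to maintain a single pair of reached states and show that the invariant $J_{e} = map(J,w)$ is preserved step by step. For the base case $w=\varepsilon$ both automata sit in their start states, $J = I_{0} = PERM\_CLOSURE(\{[S'\to\cdot^{\scriptscriptstyle(1)}S]\})$ and $J_{e} = I_{0e} = CLOSURE(\{[S'\to\cdot S]\})$, the latter taken in $G_{e}$. Because the augmenting rule $S'\to S$ is identical in $G$ and $G_{e}$, and because $map([S'\to\cdot^{\scriptscriptstyle(1)}S],\varepsilon)=\{[S'\to\cdot S]\}$ (the before-dot part is empty, so $map(\varepsilon,\varepsilon)=\varepsilon$ and $expand(S)=\{S\}$), the base case reduces to the fact that closure commutes with $map$, proved below. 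Here $\varepsilon$ is treated as the trivial input path of $I_{0}$; this is the only place the empty path is needed.

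For the inductive step write $w = w'Y$ with $Y\in N\cup T$, and let $I,I_{e}$ be the states reached after reading $w'$, so that $I_{e}=map(I,w')$ by the induction hypothesis. Decompose the transition on $Y$ as a dot-moving part followed by a closure: $\delta(I,Y)=PERM\_CLOSURE(S)$ with $S=\{\,step(i,Y)\mid i\in I,\ Y\in next(i)\,\}$, and likewise $\delta_{e}(I_{e},Y)=CLOSURE(S_{e})$ with $S_{e}=\{\,shift_{Y}(i')\mid i'\in I_{e},\ \text{the after-dot part of }i'\text{ begins with }Y\,\}$, where $shift_{Y}$ moves the dot past the leading $Y$. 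The goal $J_{e}=map(J,w'Y)$ then follows from two commutation facts. By Proposition~\ref{lem:preserve_items} every $step(i,Y)$ stays within $I(r)$, so all items stay well formed throughout.

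\emph{Shift commutes with $map$:} I claim $S_{e}=map(S,w'Y):=\bigcup_{i}map(step(i,Y),w'Y)$. This is verified item-by-item against the four cases of $step$. An item $i=[X\to\alpha\cdot\beta]$ maps to all $[X\to\alpha'\cdot\beta']$ with $\alpha'=map(\alpha,w')$ and $\beta'\in expand(\beta)$; the condition $Y\in next(i)=next(\beta)$ is exactly the condition that such a $\beta'$ begins with $Y$, which also shows that items $i$ with $Y\notin next(i)$ contribute nothing to $S_{e}$. Writing $j=[X\to\hat\alpha\cdot\hat\beta]=step(i,Y)$, in each case one checks $expand(\hat\beta)=\{\beta''\mid Y\beta''\in expand(\beta)\}$ and $map(\hat\alpha,w'Y)=\alpha'Y$. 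For the level-2 cases~2(a),~2(b) the decisive computation is that a before-dot phrase $\omega_{1}\pi_{1}$ with $map(\omega_{1}\pi_{1},w')=\alpha'$ is replaced by $\omega_{1}(\pi_{1}\cup\{Y\})$, whose expansions that are suffixes of $w'Y$ are exactly $\alpha'Y$: a permutation of $\pi_{1}\cup\{Y\}$ is a suffix of $w'Y$ precisely when it ends in $Y$ and its prefix is a permutation of $\pi_{1}$ that is a suffix of $w'$. \emph{Closure commutes with $map$:} if $map(K,\cdot)=K_{e}$ then $map(PERM\_CLOSURE(K),\cdot)=CLOSURE(K_{e})$, because a nonterminal $B$ lies in $next(\beta)$ for some $[A\to\alpha\cdot\beta]\in K$ iff $B$ heads some $\beta'\in expand(\beta)$, i.e.\ iff some mapped item has after-dot part beginning with $B$; and the items $[B\to\cdot^{\scriptscriptstyle(1)}\gamma]$ added by $PERM\_CLOSURE$ map, via $map([B\to\cdot\gamma],\cdot)=\{[B\to\cdot\gamma']\mid\gamma'\in expand(\gamma)\}$, to exactly the $G_{e}$-items $[B\to\cdot\gamma']$ added by the standard closure for the rules $expand(B\to\gamma)$. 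Chaining the two facts gives $J_{e}=CLOSURE(S_{e})=CLOSURE(map(S,w'Y))=map(PERM\_CLOSURE(S),w'Y)=map(J,w'Y)$.

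I expect the main obstacle to be the bookkeeping in the shift-commutation step: establishing both inclusions (every item of $J_{e}$ arises from a mapped $step$, and every mapped $step$ lands in $J_{e}$) uniformly across the four cases, together with the well-definedness of $map(\cdot,w)$ on before-dot phrases, namely that for each reachable $\alpha$ the string $w$ has a \emph{unique} expansion-suffix. The latter is where the proposition relating $paths(I)/_{|\alpha|}$ to $expand(\alpha)$ and the absence of repeated symbols inside a single permutation phrase are used; it also secures the consistency of the error state, since $A$ takes a $Y$-transition out of $I$ exactly when some mapped item of $I_{e}$ can shift $Y$, so both automata enter $\emptyset=map(\emptyset,\cdot)$ simultaneously.
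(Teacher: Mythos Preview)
Your proposal is correct and follows essentially the same route as the paper: induction on $|w|$, with the inductive step handling one symbol $Y$ by relating the $step$-images in $A$ to the shifted items in $A_{e}$ and then observing that closure propagates the correspondence. The paper phrases the decomposition as ``it suffices to match kernel items, since nonkernels then follow'', which is exactly your ``shift commutes with $map$'' and ``closure commutes with $map$'' pair; your formulation is slightly more explicit, but the content and the two-inclusion item chase are the same.
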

\begin{proof}
We give a proof by induction on $|w|$. The base case $|w|=0$ trivially holds. Let us assume the statement holds for $n = k$ and let us have the following computations on the input $|w| = k+1$, where $w=w'Y$, $Y\in\Sigma$:
\begin{equation}
\label{eq:correct1_A_comp_kplus1}
(I_{0},w' Y) \vdash_{A}^{k} (I,Y) \vdash_{A} (J,\varepsilon) \mbox{ and } (I_{0e},w' Y) \vdash_{A_{e}}^{k} (I_{e},Y) \vdash_{A_{e}} (J_{e},\varepsilon).
\end{equation}
Then it also holds: $(I_{0},w') \vdash_{A}^{k} (I,\varepsilon)$  and  $(I_{0},w') \vdash_{A}^{k} (I,\varepsilon)$ and based on the induction hypothesis  $I_{e} = map(I,w).$
We need to prove  $J_{e} = map(J, w Y)$. It is sufficient to prove that mapping holds for kernel items\footnote{Kernel items are those that do not have the dot at the beginning of the right-hand side (rhs). The items with a dot at the beginning of rhs are non-kernel items.}  - $kernel(J_{e}) = map(kernel(J), w Y)$ - as that implies that the mapping holds for nonkernels as well and thus $J_{e} = map(J, wY).$ We define the subsets $I_{Y}\subseteq I$, $I_{eY} \subseteq I_{e}$ that participate in the computation step of $A$ and $A_{e}$, respectively, on the symbol $Y$:
\begin{figure}[t]
\includegraphics[width=0.9\textwidth]{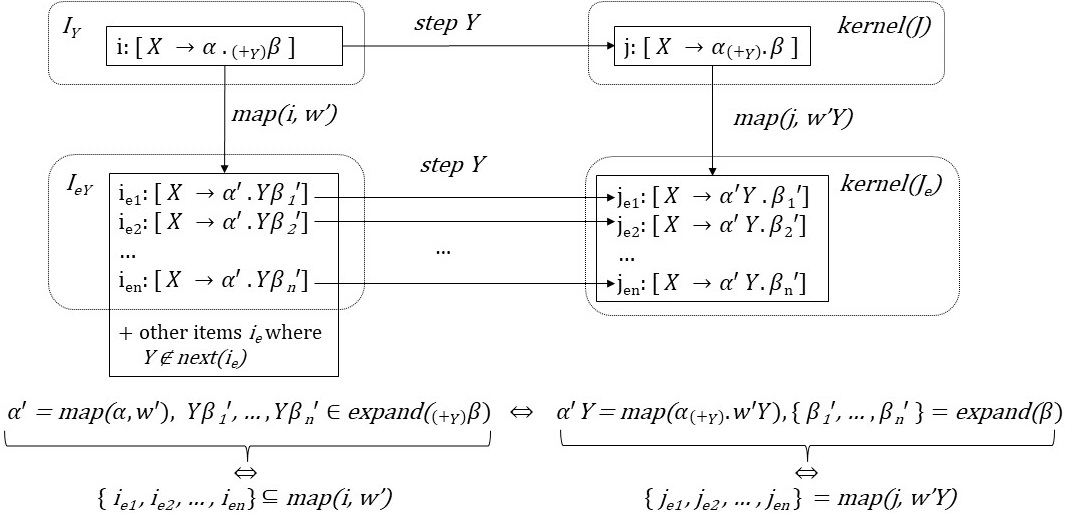}
\caption{The items participating in the last computation step of $A$ and $A_{e}$ (on the symbol $Y$) and their mutual relationships}
\label{fig:correct1proof}
\end{figure}
\begin{equation}
I_{Y} = \{ i\in I, Y\in next(i)\},\,\,\, I_{eY} =  \{ i_{e}\in I_{e}, Y\in next(i_{e})\}
\end{equation}
Based on the definition of $\delta$ and $\delta_{e}$ functions it holds
\begin{equation}
\delta(I,Y) = \delta(I_{Y}, Y) =J,\,\,\,\delta_{e}(I_{e},Y) =  \delta_{e}(I_{eY}, Y) = J_{e}
\end{equation}
\normalsize
If $I=\emptyset$ or $I_{Y}=\emptyset$, then $J=J_{e} = \emptyset$ and the statement clearly holds. Assume $I, I_{Y}\neq \emptyset$. The situation is depicted in Fig. \ref{fig:correct1proof}. 
We use the following shorthand notations:
\begin{itemize}
\item $\alpha_{(+Y)}$ to refer to the phrase resulting from adding $Y$ to the end of $\alpha$ - either by concatenation or by addition to a permutation phrase.
\item $_{(-Y)}\beta$, $Y\in next(\beta)$ to refer to the phrase resulting from removing $Y$ from the beginning of $\beta$ - either by trimming or by removal from a permutation phrase.
\end{itemize}
It is easy to see that
\begin{equation*}
\begin{array}{lllllll}
\multicolumn{3}{l}{j_{e}\in map(kernel(J), w'Y) \Rightarrow}\\[0.1cm]  
& \Rightarrow & \exists j\in kernel(J): j_{e}\in map(j, w' Y)   & \Rightarrow   \exists i\in I_{Y}: step(i, Y) = j  \Rightarrow  \\
 & \Rightarrow & \exists i_{e}\in I_{eY}: i_{e}\in map(i, w'), step(i_{e},Y) = j_{e}  & \Rightarrow  j_{e}\in kernel(J_{e}),\\[0.2cm]  
\multicolumn{3}{l}{j_{e}\in kernel(J_{e}) \Rightarrow} \\[0.1cm]  
& \Rightarrow & \exists i_{e}\in I_{eY}: step(i_{e}, Y) = j_{e}   & \Rightarrow \exists i\in I_{Y}: map(i, w') = i_{e}  \Rightarrow  \\
& \Rightarrow & \exists j\in J: j = step(i,Y), map(j, w'Y) = j_{e} & \Rightarrow   j_{e}\in map(kernel(J), w'Y). \qed
\end{array}
\end{equation*}
\end{proof}
\begin{theorem}
Let $w\in \Sigma^{*}$ and $Y\in\Sigma$. Let
$(I_{0}, w) \vdash_{A}^{*} (J,\varepsilon), \mbox{ and } (I_{0e}, w) \vdash_{A_{e}}^{*} (J_{e},\varepsilon).$
Then $J_{e}\in map(J)$.
\end{theorem}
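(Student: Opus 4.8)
The plan is to reduce the claim to the already-proved Lemma~\ref{lem:correct1} and then to pass from the string-indexed map $map(\cdot,\cdot)$ to the set-valued map $map(\cdot)$. First I observe, exactly as noted just before Lemma~\ref{lem:correct1}, that $A$ and $A_{e}$ are deterministic and complete, so any computation $(I_{0},w)\vdash_{A}^{*}(J,\varepsilon)$ and $(I_{0e},w)\vdash_{A_{e}}^{*}(J_{e},\varepsilon)$ ending in a configuration with empty input necessarily uses exactly $|w|$ steps. Thus these are precisely the computations $(I_{0},w)\vdash_{A}^{|w|}(J,\varepsilon)$ and $(I_{0e},w)\vdash_{A_{e}}^{|w|}(J_{e},\varepsilon)$ to which Lemma~\ref{lem:correct1} applies, giving $J_{e}=map(J,w)$. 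It therefore remains only to show $map(J,w)\in map(J)=\bigcup_{w''\in paths(J)}map(J,w'')$, i.e.\ to exhibit a single $w''\in paths(J)$ with $map(J,w'')=map(J,w)$.

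The key remark is that $map(J,\cdot)$ depends on its string argument only through a suffix of bounded length. Indeed, for each item $[X\to\alpha\cdot\beta]\in J$ the value $map([X\to\alpha\cdot\beta],w)$ depends on $w$ only via $map(\alpha,w)$, the unique element of $expand(\alpha)$ that is a suffix of $w$, and since $|\alpha|\le max$ this element is fixed by the last $|\alpha|$ symbols of $w$. Accordingly I would take $w''$ to be the suffix of $w$ of length $m=\min(|w|,max)$. Writing $w=w_{1}w''$, determinism makes the computation $(I_{0},w)\vdash_{A}^{*}(J,\varepsilon)$ pass, after reading $w_{1}$, through a unique state $I'$ with $(I',w'')\vdash_{A}^{*}(J,\varepsilon)$; since $0<|w''|=m\le max$, this witnesses $w''\in paths(J)$.

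It then suffices to verify $map(\alpha,w'')=map(\alpha,w)$ for every item $[X\to\alpha\cdot\beta]\in J$, which holds whenever $|w''|\ge|\alpha|$, because in that case the last $|\alpha|$ symbols of $w$ and of $w''$ coincide (and the proposition $paths(J)/_{|\alpha|}\subseteq expand(\alpha)$ guarantees this common suffix lies in $expand(\alpha)$, so the two maps are simultaneously defined and equal). The main obstacle is exactly this length bookkeeping, namely arguing $m\ge|\alpha|$ for \emph{every} item of $J$ at once: I get $|\alpha|\le max$ from the definition of $max$, and $|\alpha|\le|w|$ because reaching $J$ after reading $w$ forces an expansion of $\alpha$ to be a suffix of $w$; together these give $|\alpha|\le\min(|w|,max)=m$, so $w''$ is long enough to determine each item map while still short enough to lie in $paths(J)$. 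Taking the union over $i\in J$ yields $map(J,w'')=map(J,w)=J_{e}$, hence $J_{e}\in map(J)$. I would dispose of the degenerate cases separately: if $J=\emptyset$ then $J_{e}=map(J,w)=\emptyset\in map(\emptyset)=map(J)$, and the case $w=\varepsilon$ (where $J=I_{0}$, $J_{e}=I_{0e}$, and $max=0$) is covered by the same convention relating the two initial states that fixes the base case of Lemma~\ref{lem:correct1}.
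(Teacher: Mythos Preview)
Your proof is correct and follows the same approach as the paper, which records only the one-line remark ``The proof is based on Lemma~\ref{lem:correct1}.'' You have simply filled in the bookkeeping the paper leaves implicit, namely the passage from $J_{e}=map(J,w)$ to $J_{e}\in map(J)$ by replacing $w$ with a suffix $w''$ of length $\min(|w|,max)$ so that $w''\in paths(J)$ while $map(J,w'')=map(J,w)$.
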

The proof is based on Lemma \ref{lem:correct1}.  
\begin{theorem}
Let $w\in \Sigma^{*}$ and $Y\in\Sigma$. Let
$(I_{0}, w) \vdash_{A}^{*} (J,\varepsilon), \mbox{ and } (I_{0e}, w) \vdash_{A_{e}}^{*} (J_{e},\varepsilon).$
Then $$action(J, Y) = action(J_{e}, Y) \mbox{ or } J = J_{e} = \emptyset.$$
\end{theorem}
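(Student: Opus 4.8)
The plan is to reduce the claim to two independent correspondences — one for the shift entries and one for the reduce entries of the LR(0) table — and then read off the full action, conflicts included, from them. First I would dispose of the error case: if $J=\emptyset$ then Lemma~\ref{lem:correct1} gives $J_e=map(\emptyset,w)=\emptyset$, so both are empty and there is nothing to prove; hence assume $J,J_e\neq\emptyset$ and invoke Lemma~\ref{lem:correct1} to obtain the exact identity $J_e=map(J,w)=\bigcup_{i\in J}map(i,w)$. Recall that $action(J,Y)$ is determined entirely by two pieces of data: whether $Y\in next(J)$ (a shift on $Y$) and whether $J$ contains a complete item $[X\to\omega\cdot]$ (a reduce by $X\to\omega$), with a conflict reported when both occur or when two distinct reduce items are present; the same holds for $J_e$. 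So it suffices to show that the shift condition and the reduce condition agree between $J$ and $J_e$.

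For the shift condition I would prove $next(J)=next(J_e)$. The key auxiliary fact is that $next$ commutes with $expand$ on phrases: for every $\beta\in\Pi(G)$ one has $next(\beta)=\bigcup_{\beta'\in expand(\beta)}next(\beta')$, which follows by a short case analysis on the definition of $next$ — a leading symbol contributes itself, while a leading permutation phrase contributes the first symbols of its components, and these are exactly the possible first symbols of its permutations. Given an item $i=[X\to\alpha\cdot\beta]\in J$, the image $map(i,w)$ consists of items $[X\to\alpha'\cdot\beta']$ with $\beta'$ ranging over $expand(\beta)$, so $\bigcup_{i_e\in map(i,w)}next(i_e)=\bigcup_{\beta'\in expand(\beta)}next(\beta')=next(\beta)=next(i)$. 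Taking the union over all $i\in J$ yields $next(J_e)=next(J)$, so $A$ and $A_e$ shift on exactly the same symbols.

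For the reduce condition I would show that $J$ contains a complete item for a rule $r$ if and only if $J_e$ contains a complete item for some $r_e\in expand(r)$. The backward direction is immediate: every item of $J_e$ lies in some $map(i,w)$ with $i=[X\to\alpha\cdot\beta]$, and a complete image item $[X\to\alpha'\cdot]$ forces $\varepsilon\in expand(\beta)$, hence $\beta=\varepsilon$ (expansions of a nonempty symbol-phrase are nonempty), so $i$ itself is complete and the corresponding $r_e$ lies in $expand(r)$. For the forward direction, from a complete item $i=[X\to\omega\cdot]\in J$ I must guarantee $map(i,w)\neq\emptyset$, i.e. that some $\omega'\in expand(\omega)$ is a suffix of $w$; a reachable state carrying this item is entered only after matching an expansion of $\omega$, so $|w|\geq|\omega|$ and the length-$|\omega|$ suffix of $w$ lies in $paths(J)$ (suffix-closure), whence in $expand(\omega)$ by the inclusion $paths(J)/_{|\omega|}\subseteq expand(\omega)$, giving the required witness. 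The matching expanded items are the complete items $[X\to\omega'\cdot]$, $\omega'\in expand(\omega)$, which reduce to the same nonterminal $X$ and — because permutation phrases range over single symbols, so all expansions of $\omega$ have equal length — pop the same number of stack symbols; thus reduce-by-$r$ in $A$ and reduce-by-$r_e$ in $A_e$ are the same parser action.

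Combining the two correspondences, $J$ offers a shift on $Y$ exactly when $J_e$ does, and a reduce by an $expand$-corresponding rule exactly when $J_e$ does; consequently the two states report the same action on $Y$, and any shift/reduce or reduce/reduce conflict present in one is present in the other. I expect the main obstacle to be the forward reduce direction — certifying that each complete item of $J$ survives under $map$ into $J_e$ — since this is where the purely combinatorial $map$ relation must be reconciled with the actual run of the automaton on $w$ through $paths$ and the length bound $|w|\geq|\omega|$; by comparison the shift correspondence and the $next$/$expand$ lemma are routine, and the identification of reduce-by-$r$ with reduce-by-$r_e$ is precisely where the standing restriction to symbol-only permutation phrases is used.
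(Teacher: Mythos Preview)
Your proposal is correct and follows essentially the same route as the paper: invoke Lemma~\ref{lem:correct1} to obtain $J_e=map(J,w)$, dispose of the empty case, and then argue item-by-item that shift and reduce entries coincide under $map$. You are in fact more careful than the paper on two points it leaves implicit --- the commutation $next(\beta)=\bigcup_{\beta'\in expand(\beta)}next(\beta')$ for the shift direction, and the non-emptiness of $map(i,w)$ for complete items via $paths(J)/_{|\omega|}\subseteq expand(\omega)$ for the forward reduce direction --- and you correctly locate the use of the symbol-only restriction in the equal-length property of expansions; note only that the same $paths$-based non-emptiness argument is also needed (and works identically) for arbitrary items in your shift paragraph, not just the complete ones.
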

\begin{proof}
Based on Lemma \ref{lem:correct1} we get $J_{e} = map(J, w)$. Then it holds $J = \emptyset \Leftrightarrow J_{e} = \emptyset$. Let assume $J, J_{e} \neq \emptyset$. Based on the definition of $map$ function, an item $i\in J$ has a transition on $Y$ if and only if at least one of the items $i_{e}\in map(i,w)$ has a transition on $Y$. That implies $(shift)\in action(J, Y)$ if and only if $(shift)\in action(J_{e}, Y)$. At the same time, an item $i\in J$ is of the form $[X\to \alpha\cdot^{(1)}]$ if and only if $map(i,w) =\{i_{e}\}$ and $i_{e}$ is of the form $[X\to \sigma\cdot]$ where $\sigma = map(\alpha,w)$. This means $(reduce\,X\to\alpha)\in action(J, Y)$ if and only if $(reduce\,X\to\sigma)\in action(J_{e}, Y)$.
\end{proof}

\section{State Complexity}
\label{sec:complexity}
In this section, we analyze the difference between the number of states needed to process a permutation phrase in $A$ and to process all corresponding permutation options in $A_{e}$. We also discuss the differences in processing simple phrases in permutation rules. The greatest state reduction is achieved when a rule is independent, meaning that processing permutation phrases on the rule's right-hand side does not interfere with other rules or other parts of the same rule. 
\begin{definition}
Let $r\in P$, then the \emph{set of states processing the rule $r: X\to\omega$ in $A$ starting from $I_{0}$} is defined by
$\mathscr{I}(r, I_{0}) = \bigcup_{k=0}^{|\omega|} \mathscr{I}_{k}(r,I_{0}), \mbox{ where}$
\begin{enumerate}
\item $\mathscr{I}_{0}(r,I_{0}) = \{ I_{0} \}$, $I_{0}$ contains $[X\to \cdot\, \omega]$,
\item $\mathscr{I}_{k}(r,I_{0}) = \{ I \,|\,  I = \delta(I',Y)$,  where
\begin{itemize}
\item $I'\in \mathscr{I}_{k-1}(r,I_{0})$ and there exists $i\in I'\cap I(r)$ of the form $[X\to \omega_{1}\cdot\omega_{2}]$,\\ where $|\omega_{1}| = k-1$ and $Y\in next(\omega_{2})\}.$
\end{itemize}
\end{enumerate}
\end{definition}
Note that the set $\mathscr{I}_{k}(r,I_{0})$ contains all states reached from $I_{0}$ by processing the first $k$ symbols of $\omega$ and the set $\bigcup_{k=i}^{j} \mathscr{I}_{k}(r,I_{0})$ contains all states needed to process the subphrase of $\omega$ between the $i$-th and the $j$-th symbol.
If we replace $P, A, \delta$ with $P_{e}, A_{e}, \delta_{e}$, respectively, we get similar definition for the extended LR(0) automaton $A_{e}$. 
\begin{theorem}
Let $r\in P$ be an independent rule: $X\to\sigma_{0} \pi_{1}\sigma_{1} \ldots \sigma_{m-1}\pi_{m}\sigma_{m}$, 
 where each $\sigma_{i}\in\Pi_{s}(G)$ is a simple phrase, each $\pi_{i}\in\Pi_{p}(G)$ is a permutation phrase. Then the processing of $\sigma_{i}$/$\pi_{i}$ in $A$/$A_{e}$ requires the following number of states:
\begin{center}
\begin{tabular}{llllllll}
$A/\pi_{i}:$ & at most  $2^{|\pi_{i}|}$ states $\,\,(\in O(2^{|\pi_{i}|})$, & $\,\,A/\sigma_{i}:$ & at most  $|\sigma_{i}|$ states,\\
$A_{e}/\pi_{i}:$ & at least  $M \sum_{k=1}^{|\pi_{i}|} P(|\pi_{i}|,k)$ states $\,\,(\in \Omega (|\pi_{i}|!)$, & $\,\,A_{e}/\sigma_{i}$ & at least $M |\sigma_{i}|$  states.
\end{tabular}
\end{center}
 where the multiplication factor $M = \prod_{j=1}^{i-1}|\pi_{j}|!$ and $P(|\pi|,k) = \frac{|\pi_{i}|!}{(|\pi_{i}|-k)!}$ is the number of all $k$-permutations of $\pi_{i}$.
\end{theorem}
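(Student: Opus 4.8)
The plan is to treat the two automata separately: for $A$ I will \emph{upper}-bound the number of states that process each phrase, and for $A_{e}$ I will \emph{lower}-bound it, exploiting the set-tracking semantics of the $step$ function for $A$ against the order-tracking behaviour of the standard construction for $A_{e}$. Throughout I will work with the sets $\mathscr{I}_{k}(r,I_{0})$ and use independence of $r$, which by the preceding definitions forces $paths(I)/_{|\alpha|}=expand(\alpha)$ for every rule item $[X\to\alpha\cdot\beta]$ occurring in a state $I$, so that a state processing $r$ is pinned down by the phrase standing before its dot.

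For the $A$ bounds I will first show that while the dot traverses a permutation phrase $\pi_{i}$, every rule-$r$ item has the form $[X\to\cdots\,\pi_{1}\cdot^{(2)}\pi_{2}\,\cdots]$ (or a boundary item with a level-$(1)$ dot), and that by the definition of $step$ (cases 1b, 2a, 2b) such an item records only the \emph{set} $\pi_{1}\subseteq\pi_{i}$ of already-seen symbols, never their order. Hence the rule-$r$ items occurring during the processing of $\pi_{i}$ are in bijection with the subsets of $\pi_{i}$, of which there are $\sum_{j=0}^{|\pi_{i}|}\binom{|\pi_{i}|}{j}=2^{|\pi_{i}|}$. Independence guarantees, as illustrated in Example~\ref{ex:mapping}, that each such item occurs in a single state rather than being split across several, so the number of states does not exceed the number of items, namely $2^{|\pi_{i}|}$. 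For a simple phrase $\sigma_{i}$ the dot advances one symbol at a time (case 1a), so the rule-$r$ items are the successive dot positions and the count of states is at most $|\sigma_{i}|$ after identifying the shared boundary state.

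The heart of the argument is the $A_{e}$ lower bound, and the main obstacle is to show that orderings genuinely proliferate into distinct states rather than merging. I will argue in two stages. First, because $A_{e}$ is the standard LR(0) automaton of the expanded grammar, reading two different input sequences that are both prefixes of viable prefixes of $r$ lands in states whose kernel rule-$r$ items carry those two different sequences before the dot; as the before-dot phrases differ, the states differ. Consequently, after $A_{e}$ has consumed $\sigma_{0}\pi_{1}\sigma_{1}\cdots\pi_{i-1}\sigma_{i-1}$ it occupies one of $M=\prod_{j=1}^{i-1}|\pi_{j}|!$ pairwise-distinct states, one per choice of orderings of the earlier permutation phrases. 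Second, inside each branch the processing of $\pi_{i}$ must track the exact $k$-permutation read so far: at depth $k$ the distinct $k$-permutations number $P(|\pi_{i}|,k)=|\pi_{i}|!/(|\pi_{i}|-k)!$, and by the injectivity just established they give pairwise-distinct states, also distinct across branches and across depths. Summing $k$ from $1$ to $|\pi_{i}|$ and multiplying by the $M$ branches yields at least $M\sum_{k=1}^{|\pi_{i}|}P(|\pi_{i}|,k)\in\Omega(|\pi_{i}|!)$ states; the analogous linear count inside each branch gives at least $M|\sigma_{i}|$ states for $\sigma_{i}$ (indeed at least this many, since for $i\geq 1$ the true branching factor already includes $|\pi_{i}|!$).

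The delicate point I expect to have to justify carefully is precisely this injectivity claim for $A_{e}$—that distinct orderings never collapse to a common state. I will obtain it from the path characterisation: for the standard construction each rule item satisfies $paths(I)/_{|\alpha|}=expand(\alpha)$, and in the expanded grammar $expand$ of a concrete (partial) ordering is a single specific sequence, so two states reached by reading different $k$-permutations have different before-the-dot phrases and therefore are distinct. This is exactly the phenomenon that independence prevents $A$ from suffering, and it is what produces the factorial-versus-exponential gap between $A_{e}$ and $A$.
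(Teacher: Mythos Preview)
Your proposal is correct and follows essentially the same route as the paper: count the rule-$r$ items arising while the dot traverses $\pi_{i}$ as $k$-combinations in $A$ versus $k$-permutations in $A_{e}$, then multiply the $A_{e}$ count by $M=\prod_{j<i}|\pi_{j}|!$ coming from the earlier permutation phrases, with the simple-phrase bounds obtained analogously. If anything you are more explicit than the paper about why independence is needed for the $A$ upper bound (so that an item is not split across several states) and about the injectivity that keeps different orderings in distinct $A_{e}$ states; the paper simply equates the item count with the state count without spelling these points out.
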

\begin{proof}
Processing of $\pi_{i}$ in $A$ starts in a state $I_{0}\in Q$ that contains the item of the form $[X\to \omega_{1} \cdot^{(1)} \pi_{i} \omega_{2}]$. Then $A$ passes the states that contain the following items:
\begin{align}
\label{eq:complex_A_items}
[X & \to \omega_{1} \pi_{1}\cdot^{(2)}\pi_{2}\omega_{2}], \{\pi_{1}, \pi_{2} \} \mbox{ is a partition of } \pi_{i} 
\mbox{ and } [X & \to \omega_{1}\pi_{i}\cdot^{(1)}\omega_{2}],
\end{align}
where $\pi_{1}$ can be any of the $k$-combinations of $\pi_{i}$ for $0 <k < |\pi_{i}|$. When we count all options for (\ref{eq:complex_A_items}), we get the number of the states needed for processing $\pi_{i}$ in $r$ starting from $I_{0}$:
\begin{equation}
\left| \,\, \bigcup_{k=|\omega_{1}| +1}^{|\omega_{1}|+|\pi_{i}|} \mathscr{I}_{k}(r,I_{0})\,\,\right| = \sum_{k=1}^{|\pi_{i}|}C(|\pi_{i}|,k) = 2^{|\pi_{i}|} - 1.
\end{equation}
Let $A_{e}$ be in some of the states of $I_{0e}\in map(I_{0})$, based on the definition of $map$ function, it must be
of the form $$[X\to w_{1} \cdot \sigma w_{2}], \mbox{ where } w_{1}\in expand(\omega_{1}), \sigma\in expand(\pi_{i})\mbox{ and } w_{2}\in expand(\omega_{2}).$$
While processing $\sigma$, $A_{e}$ passes states that contain items of the form 
\begin{equation}
\label{eq:complex_Ae_items}
[X\to w_{1} \sigma_{1}\cdot\sigma_{2}w_{2}], w_{1}\in expand(\omega_{1}), \sigma = \sigma_{1}\sigma_{2}, \sigma_{1}\neq\varepsilon, w_{2}\in expand(\omega_{2}),
\end{equation}
where 
$\sigma_{1}$ can be any of the $k$-permutations of $\pi_{1}$ for $0 <k \leq |\pi_{i}|$. 
When we count all the options for $\sigma_{1}\cdot\sigma_{2}$, we get the number of states needed for processing all expansions of $\pi_{i}$ starting from the state $I_{0e}$:
\begin{equation}
\label{eq:complex_perm}
\left| \,\, \bigcup_{k=|w_{1}|+1}^{|w_{1}|+|\sigma|} \mathscr{I}_{k}(r,I_{0e})\,\,\right|  =
\sum_{k=1}^{|\pi_{i}|}P(|\pi_{i}|, k) = \sum_{k=1}^{|\pi|} \frac{|\pi|!}{(|\pi_{i}|-k)!} \geq |\pi_{i}|! -1
\end{equation}
In order to count all options for (\ref{eq:complex_Ae_items}), we have to multiply (\ref{eq:complex_perm}) by the factor $M$ that equals the number of different $w_{1}\in expand(\omega_{1})$. That corresponds to the product of the numbers of all permutation options for permutation phrases that appear in $\omega_{1}$: $M = \sum_{j=1}^{i-1} |\pi_{j}|!$. The statements for a simple phrase $\sigma_{i}$ can be proved similarly. In this case the processing proceeds in the same way both in both $A$ and $A_{e}$; however, the multiplication factor must again be applied for $A_{e}$. \qed

\end{proof}
We analyzed the state reduction for independent rules. With dependent rules, some states of $A$ are split and in the worst case, the number of the states of $A$ equals to the number of the states of $A_{e}$. It cannot be lower, as a state of $A$ can be split into at most as many states as is the number of its input paths and that is exactly the number of $A_{e}$ states.  However, the real-world grammars typically do not contain extensive rule interferences.

\subsection{JSON Example}
We provide an example of JSON schema in the form of CFGP grammar and demonstrate the state reduction achieved by our modified algorithms. Consider the following CFGP grammar $G$ that define the content of the complex objects and arrays:
\[
\footnotesize
\begin{array}{lcl}
catalog & \to & 1: catalogItem \,|\, 2: catalog\,\, catalogItem\\
catalogItem & \to & 3: \llangle\, \mathtt{id} \,\|\, \mathtt{name} \,\|\, addresses  \,\rrangle\\
addresses & \to & 4: addressesItem \,\, addresses \,|\, 5: \varepsilon \\
addressesItem & \to & 6: \llangle\,  \mathtt{addressId} \,\|\, \mathtt{homeAddress} \,\|\, \mathtt{street} \,\|\, \mathtt{number} \,\|\, \mathtt{city} \,\|\, \mathtt{code}  \,\rrangle
\end{array}
\normalsize
\]
The right-hand sides of the rules $3$ and $6$ consist of a permutation phrases of length $3$ and $6$, respectively. It is easy to see that both rules are independent. When we construct the LR(0) automaton $A$ for $G$ and $A_{e}$ for the expanded grammar of $G$, we obtain the following number of states needed for processing the permutation rules\footnote{We also included the item having the dot at the beginning.} - note that the state reduction increases rapidly as the length of the permutation phrase grows:
\begin{center}
\begin{tabular}{lllllll}
$A$&/  rule 3: at most $2^3$ $= 8$, & $A$ & /  rule 6: at most $2^6 = 64$\\
$A_{e}$&/  rule 3: at least $\sum_{k=0}^{3} \frac{3!}{(3-k)!} = 16$, & $A_{e}$ & /  rule 6: at least $\sum_{k=0}^{6} \frac{6!}{(6-k)!} = 1975$
\end{tabular}
\end{center}

\section{Construction of SLR / canonical LR / LALR parsing tables}
We describe the modification to the standard algorithms for constructing SLR / canonical LR / LALR parsing tables \cite{aho1986:compilers} so that they can process CFGPs. Two functions are needed - $FIRST$ and $FOLLOW$ and we extend them to handle permutation rules:

\begin{flushleft}
Extension of the FIRST function:\\
$\omega = \llangle\,\sigma_{1} \,\|\, \sigma_{2} \,\|\, \ldots \,\|\, \sigma_{n}\,\rrangle$  then  $FIRST(\omega) = \bigcup_{0\leq i\leq n} FIRST(\sigma_{i})$,\\[0.1cm]
$\omega = \omega_{1}\omega_{2}, \omega_{1}, \omega_{2}\in \Pi_{s}(G)\cup\Pi_{p}(G)$ then 
\begin{itemize}
\item if $\varepsilon\in FIRST(\omega_{1})$ then $FIRST(\omega) = FIRST(\omega_{1})\cup FIRST(\omega_{2})$,
\item if $\varepsilon\notin FIRST(\omega_{1})$ then $FIRST(\omega) = FIRST(\omega_{1})$.
\end{itemize}
Extension of the FOLLOW function: Let  $r: X\to \omega_{1}\pi \omega_{2}$ be a rule of $G$. If $Y\in\pi$  then 
\begin{itemize}
\item for each $Y'\in \pi$, $Y'\neq Y$,  add $FIRST(Y')\setminus \{\varepsilon\}$ to $FOLLOW(Y)$ ,
\item add $FIRST(\omega_{2})\setminus \{\varepsilon\}$ to $FOLLOW(Y)$,
\item if  $\omega_{2} = \varepsilon$ or $\varepsilon\in FIRST(\omega_{2})$ then add $FOLLOW(X)$ to $FOLLOW(Y)$.
\end{itemize}

\end{flushleft}
We get LR(1) items by adding lookahead to the LR(0) items. The body of the repeat loop in the closure function for LR(1) items is modified as follows\footnote{We use the notation $_{(-B)}\beta$ as in the proof of Lemma \ref{lem:correct1}.}:

\begin{flushleft}
\footnotesize
\noindent{\tt
\hspace*{15pt}    \textbf{for} (each item $[A\to\alpha \cdot^{\scriptscriptstyle (i)} \beta, a]$ in $J$) \\
\hspace*{30pt}    \textbf{for} (each nonterminal $B\in next(\beta)$)\\
\hspace*{45pt}    \textbf{for} (each production $B\to \gamma$ of $G$)\\   
\hspace*{60pt}    \underline{\textbf{for} (each symbol $b$ in  $FIRST(_{(-B)}\beta a)$}  \\
\hspace*{75pt}    \textbf{if} ($[B\to\cdot^{\scriptscriptstyle (1)} \gamma]$ not in $J$)\\   
\hspace*{90pt}    add $[B\to\cdot^{\scriptscriptstyle (1)} \gamma, b]$ to $J$;	\\
}
\end{flushleft}
\normalsize
Assume $A$ and $A_{e}$ are LR(1) automata for $G$ and $G_{e}$, respectively, constructed using the $PERM\_CLOSURE$ function above. 
The $map$ function for an LR(1) item and an input string $w$ maps the LR(0) part of the item in the same way as for LR(0) items and does not manipulate the lookahead part. Let $I$ be a state of $A$. Each of the mapped states $I_{e}$ contains all expansions of the phrases that appear after the dots in $I$. When constructing parsing table for canonical LR parser, the states of $A$ are split based on the lookahead only if the mapped states of $A_{e}$ are also split, preserving the state reduction rate.  Similarly, when merging states for an LALR parser, the state reduction rate remain unaffected.

\section{Conclusion and Future Work}
\label{sec:conclusion}
We presented a modification of LR parsing algorithms that, in practical cases, generates significantly smaller parsing tables. For independent rules, the number of states needed for processing a permutation phrase $\pi$ of size $n$ in LR(0) automaton is reduced from $\Omega(n!)$ to $O(2^{n})$. The reduction in the number of states increases with the size of $\pi$ as well as its placement within the right-hand side of a rule. The more permutation phrases appear before $\pi$, the higher the reduction. In addition to providing a more efficient approach for processing permutation phrases in existing languages, we hope that the findings of this work will also assist language designers in making informed decisions about incorporating permutation phrases into their specifications.

Our algorithm does not support nested simple phrases and optional elements within a permutation phrase. It is pretty straightforward to include both of them. For nested phrases, another level for the dot needs to be introduced and the $step$ function must be extended to handle that level. It is required that, within a permutation phrase, one nested simple phrase is not a prefix of another to avoid conflicts. Optional elements require the modification of the $next$ function and they cannot conflict with the set of symbols that can follow given permutation phrase. In both cases the limitations could be avoided by parallel processing of more items.  It would be beneficial to extend the algorithm to handle nested simple phrases and optional elements without limitations.  Another possible direction for future work is to explore in detail the relationship between the number and type of rule interferences and the resulting reduction in states.

%


\begin{thebibliography}{11}

\bibitem{ecma:json}
ECMA-404 The JSON data interchange syntax, 2nd Edition, ECMA, 2017.
\url{https://ecma-international.org/publications-and-standards/standards/ecma-404/}

\bibitem{ietf:json}
The JavaScript Object Notation (JSON), Data Interchange Format, Internet Engineering Task Force (IETF), 2017. \url{https://datatracker.ietf.org/doc/html/rfc8259/}

\bibitem{jsonschema}
JSON Schema: A Media Type for Describing JSON Documents, Internet Engineering Task Force (IETF), 2022.
\url{https://json-schema.org/draft/2020-12/json-schema-core}

\bibitem{hopcroft2013:automata}
Hopcroft, John E.; Motwani, Rajeev; Ullman, Jeffrey D.: Introduction to Automata Theory, Languages, and Computation (3rd ed.), Pearson. 2013.

\bibitem{cameron1993:permphrases}
Cameron, R. D.: Extending context-free grammars with permutation phrases. ACM Lett. Program. Lang. Syst. 2(4), 85-94 (1993).

\bibitem{hopcroft1971:minimizing}
Hopcroft, J.: An n log n algorithm for minimizing states in a finite automaton. In Proc. Inter. Symp. on the Theory of Machines and Computations, pages
189–196, Haifa, Israel, 1971.

\bibitem{brzozowski1963:minimizing}
J. A. Brzozowski. Canonical regular expressions and minimal state graphs
for definite events. In J. Fox, editor, Proc. of the Sym. on Math. Theory of
Automata, volume 12 of MRI Symposia Series, pages 529–561, NY, 1963.

\bibitem{moore1956:minimizing}
Moore, Edward F. : Gedanken-experiments on sequential machines, Automata studies, Annals of mathematics studies, no. 34, Princeton, N. J.: Princeton University Press, pp. 129–153, 1956.


\bibitem{aho1986:compilers}
Aho, A.V., Ravi, S., Ullman, J. D.: Compilers: Principles,
Techniques, and Tools (1st ed.), Addison-Wesley, 1986.

  
\bibitem{baars2004:functional}
Baars, A.I., Löh A., Swierstra S.D.: Parsing permutation phrases. J. Funct. Program. 14, 6 (November 2004), 635–646. https://doi.org/10.1017/S0956796804005143


\bibitem{zhang2008:xmlparsing}
Zhang, W., van Engelen, R. A.:  High-Performance XML Parsing and Validation with Permutation Phrase Grammar Parsers, 2008 IEEE International Conference on Web Services, Beijing, China, 2008, pp. 286-294. doi: 10.1109/ICWS.2008.101

  








\end{thebibliography}
\end{document}